\documentclass[conference]{IEEEtran}

\usepackage{amsmath,amsfonts,amssymb}
\usepackage{hyperref} 
\hypersetup{
	hidelinks,
	colorlinks=true,
	linkcolor=black,
	citecolor=black
}
\usepackage{multirow}
\usepackage{makecell}
\usepackage{cite}

\usepackage{algorithmicx,algorithm}
\usepackage{algpseudocode}

\usepackage{graphicx}
\usepackage{subfigure}
\usepackage{array}
\usepackage{textcomp}
\usepackage{stfloats}
\usepackage{url}
\usepackage{verbatim}
\usepackage{mathrsfs}
\usepackage{enumitem}
\usepackage{bm}
\usepackage{balance}
\usepackage{caption}
\usepackage{xcolor}
\usepackage{amsthm}	

\hyphenation{op-tical net-works semi-conduc-tor IEEE-Xplore}

\captionsetup[figure]{font=footnotesize}
\captionsetup[subfloat]{font=footnotesize}
\captionsetup[table]{labelformat=simple, textfont=sc}
\newtheorem{lemma}{Lemma}


\newtheorem{proposition}{\bf Proposition}

\DeclareMathOperator{\rank}{rank}

\begin{document}

\IEEEoverridecommandlockouts

\title{Channel Estimation for Optical IRS-Assisted \\VLC System via Spatial Coherence}

\author{\IEEEauthorblockN{Shiyuan Sun\textsuperscript{1},Fang Yang\textsuperscript{1}, Weidong Mei\textsuperscript{2}, Jian Song\textsuperscript{1,3}, Zhu Han\textsuperscript{4}, and Rui Zhang\textsuperscript{5,6}}\\ 

\IEEEauthorblockA{
	\textsuperscript{1}Department of Electronic Engineering, Tsinghua University\\
	Beijing National Research Center for Information Science and Technology (BNRist), Beijing 100084, P. R. China\\ 
	\textsuperscript{2}University of Electronic Science and Technology of China, Chengdu 611731, P. R. China\\ 
	\textsuperscript{3}Shenzhen International Graduate School, Tsinghua University, Shenzhen 518055, P. R. China\\ 
	\textsuperscript{4}Department of Electrical and Computer Engineering, University of Houston, Houston, TX 77004 USA\\ 
	\textsuperscript{5}The Chinese University of Hong Kong (Shenzhen), and Shenzhen Research Institute of Big Data, Shenzhen 518172, China\\ 
	\textsuperscript{6}Department of Electrical and Computer Engineering, National University of Singapore, Singapore 117583, Singapore\\
	Email: sunsy20@mails.tsinghua.edu.cn, fangyang@tsinghua.edu.cn, wmei@uestc.edu.cn,\\ jsong@tsinghua.edu.cn, hanzhu22@gmail.com, rzhang@cuhk.edu.cn}}
\maketitle

\begin{abstract}
Optical intelligent reflecting surface (OIRS) has been considered a promising technology for visible light communication (VLC) by constructing visual line-of-sight propagation paths to address the signal blockage issue. 
However, the existing works on OIRSs are mostly based on perfect channel state information (CSI), whose acquisition appears to be challenging due to the passive nature of the OIRS.
To tackle this challenge, this paper proposes a customized channel estimation algorithm for OIRSs. 
Specifically, we first unveil the OIRS spatial coherence characteristics and derive the coherence distance in closed form.
Based on this property, a spatial sampling-based algorithm is proposed to estimate the OIRS-reflected channel, by dividing the OIRS into multiple subarrays based on the coherence distance and sequentially estimating their associated CSI, followed by an interpolation to retrieve the full CSI.
Simulation results validate the derived OIRS spatial coherence and demonstrate the efficacy of the proposed OIRS channel estimation algorithm.
\end{abstract}

\section{Introduction}
To tackle the scarcity of communication spectrum, the visible light communication (VLC) technology has emerged as a pivotal solution, leveraging a 400 THz frequency-band and excelling in energy efficiency~\cite{chi2020visible}. 
Nonetheless, VLC encounters a substantial hurdle in terms of signal blockage, primarily due to the nanoscale wavelength of the lightwave~\cite{feng2018mmwave}.

Fortunately, the emerging intelligent reflecting surface (IRS) offers a viable solution to resolve the blockage issue~\cite{gongshimin2020}. 
Specifically, an IRS consists of numerous reflective elements that can adjust the phase shift and/or amplitude of the incident wave, thereby enabling signal coverage for dead zones~\cite{wu2019towards}. 
The use of IRS for the optical frequency band, namely optical IRS (OIRS), has been explored for various purposes, e.g., enhancing the communication rate performance~\cite{aboagye2022ris,10183987}. 
Particularly, the OIRS-associated channel state information (CSI) needs to be acquired to design the OIRS reflection~\cite{zheng2022survey}.
However, all of the above works assume known CSI, without addressing the channel estimation issue for OIRSs. 

Notably, a variety of channel estimation methods have been developed for IRSs in the radio frequency (RF), which can be broadly classified into two types, i.e., cascaded channel estimation~\cite{zheng2022survey} and separate channel estimation~\cite{hu2021semi}. 
However, these methods cannot be directly applied to OIRSs due to their differences in channel model and signal constraints~\cite{obeed2019optimizing}. 
So far, there are two widely adopted OIRS-reflected channel model, namely, the classical optics model and the association-based model.
In the first model, the OIRS channel gain is directly characterized with respect to (w.r.t.) actual parameters of each reflecting element, e.g., the rotation angles in the mirror-array-based implementation.
Although it offers a precise characterization of the OIRS-reflected channel gain, the theoretical analysis for the optics model can be complicated due to its integral and non-linear expressions~\cite{abdelhady2020visible}.
On the other hand, the OIRS association model characterizes the channel based on the associations between OIRS reflecting elements and the transmitter/receiver antennas~\cite{sun_CL}, under the small-source assumption.
By this means, the OIRS-reflected channel can be expressed in a linear form, which significantly facilitates its theoretical analysis and the channel estimation design.
Nonetheless, to the best of our knowledge, the OIRS channel estimation in VLC system remains largely unexplored under either of the above two channel models.

This paper fills in this gap by proposing an OIRS channel estimation method based on the OIRS association model. 
Specifically, given the widely adopted intensity modulation (IM) in VLC, we unveil that the OIRS-reflected channel shows a prominent coherence in the spatial domain, and proceed to derive the OIRS coherence distance in terms of geometry in closed form. 
Based on this distance, we propose a spatial sampling-based OIRS channel estimation algorithm, which divides the OIRS into multiple subarrays to estimate their associated CSI and recovers the full CSI through spatial interpolation.
Finally, numerical results are provided to validate the effectiveness of the proposed algorithm, which achieves a substantial reduction in estimation overhead with negligible performance loss.

The remainder of the paper is organized as follows. 
Section~\ref{Sec:Model} presents the system model of the OIRS-assisted VLC.
Section~\ref{Sec:Coherence} analyzes the OIRS spatial coherence characteristics.
In Section~\ref{Sec:Proposed}, a spatial sampling-based channel estimation algorithm is proposed.
Numerical results are presented in Section~\ref{Sec:numerical}, and finally the conclusions are drawn in Section~\ref{Sec:Conclude}.

\textit{Notation:}
An uppercase boldface letter $\textbf{A}$ denotes the coordinate in a three-dimensional Cartesian coordinate system, $\textbf{AB}$ denotes a vector from $\textbf{A}$ to $\textbf{B}$, and $\widehat{\textbf{AB}}$ denotes the normalized vector with a magnitude of $\|\widehat{\textbf{AB}}\|_2=1$.
Let $\text{vec}(\cdot)$ denote the vectorization of a matrix by column and $\boldsymbol{I}_N$ denote an $N \times N$ identity matrix.
Moreover, $\otimes$, $\odot$, and $\propto$ represent the Kronecker product, the Hadamard product, and the proportion symbol, respectively.

\section{System Model}
{\label{Sec:Model}}
As depicted in Fig.~\ref{Fig:multiple}, this paper considers a VLC system assisted by an OIRS with $N$ elements, where a receiver equipped with $N_r$ photodetectors (PDs) receives the signal from a transmitter equipped with $N_t$ light-emitting diodes (LEDs).

Compared with conventional RF IRS, where the impinging electromagnetic wave is uniformly reflected in all directions due to the sub-wavelength reflecting elements~\cite{ozdogan2019intelligent}, the OIRS-reflected channel shows different properties.
Specifically, the aperture of each reflecting element far exceeds the wavelength of the lightwave, which leads to an anisotropic property of OIRS, i.e., each OIRS element can be configured to align with a specific pair of LED and PD without causing interference to any other pair, which is known as the ``angle selective'' property~\cite{10190313}.
Accordingly, the OIRS reflection can be characterized based on the alignment or association between its reflecting elements and LED/PD. 
Let $\boldsymbol{G} \triangleq [ \boldsymbol{g}_1, \cdots, \boldsymbol{g}_{N_t} ] \in\{0,1\}^{N\times N_t}$ and $\boldsymbol{F} \triangleq [ \boldsymbol{f}_1, \cdots, \boldsymbol{f}_{N_r} ] \in\{0,1\}^{N\times N_r}$ denote the associations between the OIRS reflecting elements and LEDs and that between OIRS elements and PDs, respectively. 
If an entry of $\boldsymbol{G}$ or $\boldsymbol{F}$ is equal to one, then it implies that the corresponding LED-OIRS element or PD-OIRS element pairs are aligned with each other.
Otherwise, they are not aligned. 
Then, the OIRS-reflected VLC channel can be expressed as~\cite{10190313}
\begin{align}
	\label{Eq:channel}
	\boldsymbol{H} &= \left[
	\begin{matrix}
		\left(\boldsymbol{f}_1 \odot \boldsymbol{g}_1\right)^T\boldsymbol{h}_{1,1}  &  \cdots  &  \left(\boldsymbol{f}_1 \odot \boldsymbol{g}_{N_t}\right)^T\boldsymbol{h}_{1,N_t} \\
		\vdots  &  \ddots  &  \vdots \\
		\left(\boldsymbol{f}_{N_r} \odot \boldsymbol{g}_1\right)^T\boldsymbol{h}_{N_r,1}  &  \cdots  &  \left(\boldsymbol{f}_{N_r} \odot \boldsymbol{g}_{N_t}\right)^T\boldsymbol{h}_{N_r,N_t} \\
	\end{matrix}
	\right],
\end{align}
where $\boldsymbol{h}_{n_r,n_t} \in \mathcal{R}^{N\times 1}_+$ denotes the channel gain vector between the $n_t$th LED and  $n_r$th PD.
The overall CSI parameter matrix to be estimated is defined as $\boldsymbol{H}_c \triangleq [\boldsymbol{h}_{1,1}, \cdots, \boldsymbol{h}_{N_r,1}, \boldsymbol{h}_{1,2}, \cdots, \boldsymbol{h}_{N_r,N_t}]$, which consists of $NN_tN_r$ entries.
Moreover, considering the ``angle selective'' property, the following inequalities should hold, i.e., $\sum_{n_t=1}^{N_t}g_{n, n_t} \leq 1$ and $\sum_{n_r=1}^{N_r}f_{n, n_r} \leq 1$~\cite{10190313}.

We divide the OIRS channel coherence time into multiple time blocks, over which the subarrays of the OIRS are sequentially turned on (to be specified in Section~\ref{Sec:Proposed}). 
Each time block is further divided into $P$ time slots for pilot transmission to estimate the CSI associated with the ``ON'' subarray. 
In the $p$th time slot of any time block, the received signal at the PD is given by
\begin{equation}
	\label{Eq:signal model}
	\setlength\abovedisplayskip{3pt}
	\boldsymbol{y}_p = \boldsymbol{H}\boldsymbol{x}_p + \boldsymbol{z}_p,
	\setlength\belowdisplayskip{3pt}
\end{equation}
where $\boldsymbol{x}_p \in \mathcal{R}^{N_t\times 1}_+$, $\boldsymbol{y}_p \in \mathcal{R}^{N_r\times 1}_+$, and $\boldsymbol{z}_p \in \mathcal{R}^{N_r\times 1}$ represent the transmitted pilot signal, the received signal, and the additive white Gaussian noise (AWGN) in the $p$-th time slot, respectively.
By stacking the received signals over the $P$ slots, namely $\boldsymbol{Y} \triangleq [\boldsymbol{y}_1, \cdots, \boldsymbol{y}_P]$, $\boldsymbol{X} \triangleq [\boldsymbol{x}_1, \cdots, \boldsymbol{x}_P]$, and $\boldsymbol{Z} \triangleq [\boldsymbol{z}_1, \cdots, \boldsymbol{z}_P]$, the overall received signal for a time block can be represented as
\begin{equation}
	\label{Eq:signal model_cohere}
	\setlength\abovedisplayskip{3pt}
	\boldsymbol{Y} = \boldsymbol{H} \boldsymbol{X} + \boldsymbol{Z},
	\setlength\belowdisplayskip{3pt}
\end{equation}
based on which $\boldsymbol{H}_c$ is estimated. 
Note that as compared to the RF IRS, the OIRS channel estimation results in considerably more CSI parameters, i.e., $NN_tN_r \gg N(N_t + N_r)$.
As such, this paper aims to propose a dedicated channel estimation algorithm for OIRSs. 
To this end, we first unveil an inherent coherence characteristic of OIRSs, as detailed below.

\begin{figure}[t]
	\centering    
	\includegraphics[width=0.48\textwidth]{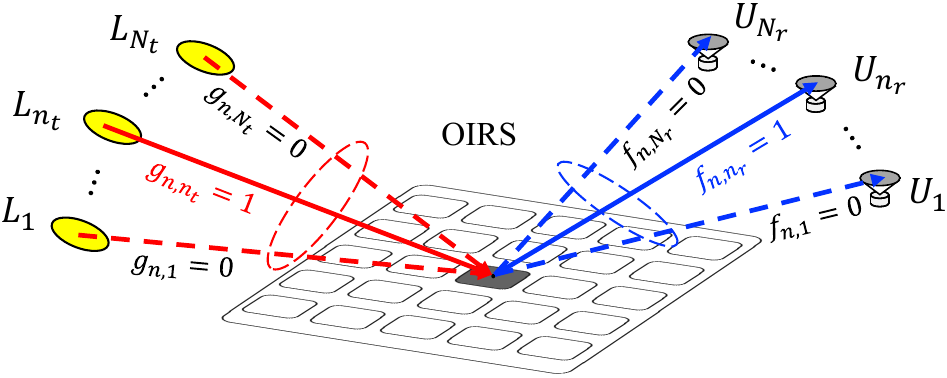}
	\caption{OIRS-reflected channel model.}
	\label{Fig:multiple}
\end{figure}

\section{OIRS Spatial Coherence Analysis}
\label{Sec:Coherence}
In this section, we analyze the spatial coherence of the OIRS channel and derive the OIRS coherence distance.
Considering the coherent modulation in RF communications, as shown in Fig.~\ref{Fig:correlation_RF}, the channel gains between adjacent RF IRS elements can be completely different due to the phase shift.

However, in VLC, the widely adopted IM scheme enables the OIRS-reflected channel gain to change slowly w.r.t. the space shift, as depicted in Fig.~\ref{Fig:correlation_OIRS}.
Let $\textbf{L}$, $\textbf{R}$, and $\textbf{U}$ represent the points at the LED, an OIRS reflecting element, and the PD, respectively.
Under the point source assumption, the OIRS-reflected channel gain in indoor VLC systems is upper-bounded by~\cite{abdelhady2020visible}
\begin{align}
	\label{Eq:lambertian}
	\setlength\abovedisplayskip{3pt}
	h(\textbf{R}) \propto \left\{
	\begin{aligned}
		&\frac{\left(\widehat{\boldsymbol{N}}_1^T\widehat{\textbf{LR}}\right)^m\widehat{\boldsymbol{N}}_2^T\widehat{\textbf{UR}}}{\left(\|\textbf{LR}\|_2 + \|\textbf{UR}\|_2\right)^2}, & &\text{if}\ 0 \leq \widehat{\boldsymbol{N}}_2^T\widehat{\textbf{UR}} \leq \cos\Phi_0,   \\
		&0, & &\text{otherwise},
	\end{aligned}
	\right.
	\setlength\belowdisplayskip{3pt}
\end{align}
\vspace{-0.1cm}

\noindent
where $\widehat{\boldsymbol{N}}_1$, $\widehat{\boldsymbol{N}}_2$, and $\Phi_0$ denote the normal vector at the LED, the normal vector at the PD, and the semi-angle of field-of-view, respectively.
Based on~(\ref{Eq:lambertian}), the relative growth rate of the OIRS-reflected channel gain w.r.t. the space shift $\Delta\textbf{R}$ at $\textbf{R}$ is defined as
\begin{align}
	\label{Eq:xis}
	\setlength\abovedisplayskip{3pt}
	\xi(\Delta\textbf{R}) = \frac{h(\textbf{R} + \Delta\textbf{R}) - h(\textbf{R})}{h(\textbf{R})}.
	\setlength\belowdisplayskip{3pt}
\end{align}

Regarding~(\ref{Eq:xis}), we present the following lemma.

\begin{figure}[t]
	\centering
	\subfigure[RF IRS]{
		\centering
		\includegraphics[width=0.21\textwidth]{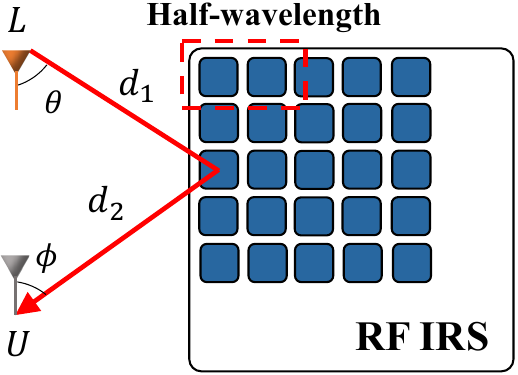}
		\setlength{\abovecaptionskip}{0.9cm}
		\label{Fig:correlation_RF}
	}
	\hspace{-0.5cm}
	\subfigure[OIRS]{
		\centering
		\includegraphics[width=0.24\textwidth]{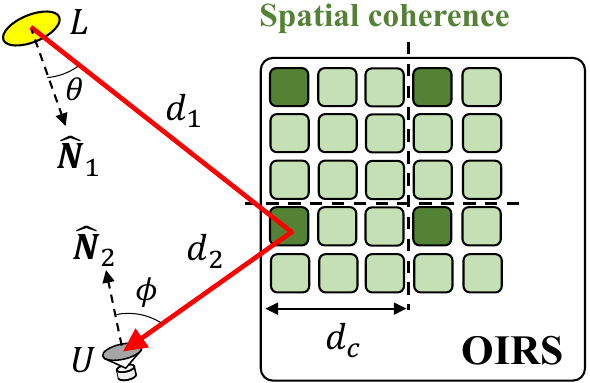}
		\setlength{\abovecaptionskip}{0.9cm}
		\label{Fig:correlation_OIRS}
	}
	\caption{The effect of space shift for the IRS-reflected channel gain: (a) Channel gain changes significantly due to the half-wavelength element spacing; (b) Channel gain shows spatial coherence due to the IM scheme adopted in VLC.}
	\label{Fig:correlation}
\end{figure}

\begin{lemma}
	\label{Lemma:growth rate}
	The growth rate of $\xi(\Delta\textbf{R})$ is given by
	\begin{align}
		\xi = &\Big\{ \frac{m}{d_1\cos\theta}\widehat{\boldsymbol{N}}_1 + \frac{1}{d_2\cos\phi}\widehat{\boldsymbol{N}}_2 - (\frac{m}{d_1} + \frac{2}{d_1 + d_2})\widehat{\textbf{LR}} - \notag\\
		&(\frac{1}{d_2} + \frac{2}{d_1 + d_2})\widehat{\textbf{UR}} \Big\}^T \Delta\textbf{R} 
		+ \Delta\textbf{R}^T \Big\{ \frac{-m}{d_1^2\cos^2\theta}\widehat{\boldsymbol{N}}_1\widehat{\boldsymbol{N}}_1^T - \notag\\
		& \frac{1}{d_2^2\cos^2\phi}\widehat{\boldsymbol{N}}_2\widehat{\boldsymbol{N}}_2^T + 2(\frac{m}{d_1^2} + \frac{1}{(d_1 + d_2)^2} + \frac{2}{d_1^2(d_1 + d_2)}) \notag\\
		& \times \widehat{\textbf{LR}}\widehat{\textbf{LR}}^T + 2(\frac{1}{d_2^2} + \frac{1}{(d_1 + d_2)^2}+\frac{2}{d_2^2(d_1 + d_2)})\widehat{\textbf{UR}}\widehat{\textbf{UR}}^T \notag\\
		& + \frac{2}{(d_1 + d_2)^2}(\widehat{\textbf{LR}}\widehat{\textbf{UR}}^T + \widehat{\textbf{UR}}\widehat{\textbf{LR}}^T) - (\frac{m}{d_1^2} + \frac{2}{d_1^2(d_1 + d_2)} \notag\\
		& + \frac{1}{d_2^2} + \frac{2}{d_2^2(d_1 + d_2)})\boldsymbol{I}_3 \Big\} \frac{\Delta\textbf{R}}{2},
	\end{align}
	where $\phi$ denotes the angle of incidence at the receiver with $\cos(\phi) = \widehat{\boldsymbol{N}}_2^T\widehat{\textbf{UR}}$.
\end{lemma}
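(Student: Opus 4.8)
\emph{Proof plan.} The plan is to read the stated expression as the second-order Taylor expansion of $\xi(\Delta\textbf{R})$ about $\Delta\textbf{R}=\boldsymbol{0}$, regarding the bound in~(\ref{Eq:lambertian}) as a smooth function $h(\textbf{R})$ of the reflecting-element position on the regime $0\le\widehat{\boldsymbol{N}}_2^T\widehat{\textbf{UR}}\le\cos\Phi_0$ (so that $h>0$). Since $\xi=h(\textbf{R}+\Delta\textbf{R})/h(\textbf{R})-1$ and the right-hand side of~(\ref{Eq:lambertian}) is a product of powers, the cleanest route is to expand the logarithm $\ell(\Delta\textbf{R})\triangleq\ln(1+\xi(\Delta\textbf{R}))=\ln h(\textbf{R}+\Delta\textbf{R})-\ln h(\textbf{R})$, which splits additively as $\ln h(\textbf{R})=m\ln\cos\theta+\ln\cos\phi-2\ln(d_1+d_2)$ up to a constant, with $\cos\theta=\widehat{\boldsymbol{N}}_1^T\widehat{\textbf{LR}}$, $\cos\phi=\widehat{\boldsymbol{N}}_2^T\widehat{\textbf{UR}}$, $d_1=\|\textbf{LR}\|_2$, $d_2=\|\textbf{UR}\|_2$. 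Writing $\ell(\Delta\textbf{R})=(\nabla_{\textbf{R}}\ln h)^T\Delta\textbf{R}+\tfrac12\Delta\textbf{R}^T(\nabla^2_{\textbf{R}}\ln h)\Delta\textbf{R}+O(\|\Delta\textbf{R}\|^3)$ and using $\xi\approx\ell$ for small shifts then yields the claim, so it remains to compute the gradient and the Hessian of $\ln h$.

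First I would assemble the gradient from the elementary identities, valid for any fixed point $\textbf{P}$ and fixed unit vector $\widehat{\boldsymbol{N}}$: $\nabla_{\textbf{R}}\|\textbf{R}-\textbf{P}\|_2=\widehat{\textbf{PR}}$, $\nabla_{\textbf{R}}\widehat{\textbf{PR}}=\tfrac{1}{\|\textbf{PR}\|_2}(\boldsymbol{I}_3-\widehat{\textbf{PR}}\widehat{\textbf{PR}}^T)$, hence $\nabla_{\textbf{R}}(\widehat{\boldsymbol{N}}^T\widehat{\textbf{PR}})=\tfrac{1}{\|\textbf{PR}\|_2}(\widehat{\boldsymbol{N}}-(\widehat{\boldsymbol{N}}^T\widehat{\textbf{PR}})\widehat{\textbf{PR}})$. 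Applying these with $\textbf{P}=\textbf{L}$ and $\textbf{P}=\textbf{U}$, chaining through $\nabla\ln u=\nabla u/u$, and collecting the $\widehat{\boldsymbol{N}}_1$, $\widehat{\boldsymbol{N}}_2$, $\widehat{\textbf{LR}}$, $\widehat{\textbf{UR}}$ components produces exactly the linear bracket multiplying $\Delta\textbf{R}$ in the statement; in particular the $\cos\theta\,\widehat{\textbf{LR}}$ piece from $\nabla\cos\theta$ merges with the $-2\widehat{\textbf{LR}}/(d_1+d_2)$ contribution from $-2\ln(d_1+d_2)$ to give the coefficient $-(\tfrac{m}{d_1}+\tfrac{2}{d_1+d_2})$, and similarly for $\widehat{\textbf{UR}}$.

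Then I would differentiate the gradient once more to obtain the Hessian, using the same three identities together with $\nabla^2\ln u=\nabla^2 u/u-(\nabla u/u)(\nabla u/u)^T$ for each of $u\in\{\cos\theta,\cos\phi,\,d_1+d_2\}$; the only new ingredient needed is the Jacobian of the vector field $\widehat{\boldsymbol{N}}-(\widehat{\boldsymbol{N}}^T\widehat{\textbf{PR}})\widehat{\textbf{PR}}$, obtained from the identities above. Each of the three terms contributes a handful of rank-one matrices; after substituting $\widehat{\boldsymbol{N}}_1^T\widehat{\textbf{LR}}=\cos\theta$ and $\widehat{\boldsymbol{N}}_2^T\widehat{\textbf{UR}}=\cos\phi$ and regrouping by the building blocks $\widehat{\boldsymbol{N}}_1\widehat{\boldsymbol{N}}_1^T$, $\widehat{\boldsymbol{N}}_2\widehat{\boldsymbol{N}}_2^T$, $\widehat{\textbf{LR}}\widehat{\textbf{LR}}^T$, $\widehat{\textbf{UR}}\widehat{\textbf{UR}}^T$, the symmetrized cross term $\widehat{\textbf{LR}}\widehat{\textbf{UR}}^T+\widehat{\textbf{UR}}\widehat{\textbf{LR}}^T$ (which arises only from the $\nabla(d_1{+}d_2)\,\nabla(d_1{+}d_2)^T$ part of the Hessian of $-2\ln(d_1+d_2)$), and $\boldsymbol{I}_3$, the mixed products of $\widehat{\boldsymbol{N}}_1$ with $\widehat{\textbf{LR}}$ and of $\widehat{\boldsymbol{N}}_2$ with $\widehat{\textbf{UR}}$ cancel inside each cosine term. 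This leaves the quadratic bracket in the statement; multiplying by $\tfrac12\Delta\textbf{R}$ on the right and adding the linear part completes the derivation.

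I expect the main obstacle to be precisely this second-order bookkeeping: differentiating the normalized direction vectors and the cosines a second time spawns many rank-one terms, and one must track all signs and prefactors carefully — and in particular verify the cancellation of the mixed $\widehat{\boldsymbol{N}}_i$--$\widehat{\textbf{LR}}/\widehat{\textbf{UR}}$ terms — in order to land on the compact matrix displayed. The remaining ingredients (the norm/direction derivative identities, the log chain rule, and the approximation $\xi\approx\ln(1+\xi)$ valid in the point-source, small-shift regime underlying~(\ref{Eq:lambertian})) are routine.
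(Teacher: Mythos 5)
Your plan is essentially the paper's own argument with the details filled in: Appendix~\ref{app:A} simply writes out $h(\textbf{R}+\Delta\textbf{R})$, invokes a second-order Taylor expansion of the channel-gain ratio, and states the coefficient vector $\boldsymbol{s}_1$ and matrix $\boldsymbol{S}_2$ without showing any differentiation. Your logarithmic decomposition $\ln h = m\ln\cos\theta+\ln\cos\phi-2\ln(d_1+d_2)+\text{const}$ is a sensible device that turns the product into a sum, the three elementary identities you list are exactly the right ingredients, and the cancellation of the mixed $\widehat{\boldsymbol{N}}_i$--$\widehat{\textbf{LR}}/\widehat{\textbf{UR}}$ rank-one terms inside each $\nabla^2\ln\cos(\cdot)$ does occur as you anticipate. (A careful execution will also expose that the terms $2/(d_i^2(d_1+d_2))$ in the displayed $\boldsymbol{S}_2$ are dimensionally inconsistent with the rest of the matrix and should read $2/(d_i(d_1+d_2))$, coming from $-2\nabla^2 d_i/(d_1+d_2)$; this is a defect of the statement, not of your plan.)

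The one step you cannot leave as a hand-wave is ``$\xi\approx\ell$.'' At first order the two expansions coincide because $\nabla\ln h=\nabla h/h$, but at second order they differ by exactly $\tfrac12(\boldsymbol{s}_1^T\Delta\textbf{R})^2$: the quadratic coefficient of the ratio is $\nabla^2h/h=\nabla^2\ln h+\boldsymbol{s}_1\boldsymbol{s}_1^T$, not $\nabla^2\ln h$. This discrepancy is of the same order as the term you are computing, so it cannot be absorbed into the $o(\|\Delta\textbf{R}\|^2)$ remainder. As it happens, the matrix displayed in the lemma carries the signature of the log-Hessian (the $\widehat{\boldsymbol{N}}_1\widehat{\boldsymbol{N}}_1^T$ coefficient is $-m/(d_1^2\cos^2\theta)$ rather than $m(m-1)/(d_1^2\cos^2\theta)$, and there is no $\widehat{\boldsymbol{N}}_1\widehat{\boldsymbol{N}}_2^T$ cross term of the kind $\boldsymbol{s}_1\boldsymbol{s}_1^T$ would generate), so your computation will land on the stated formula; but you must say explicitly which quantity you are expanding, and either add the $\tfrac12\boldsymbol{s}_1\boldsymbol{s}_1^T$ correction if the target is $\xi$ itself, or justify interpreting the lemma as the expansion of $\ln\bigl(h(\textbf{R}+\Delta\textbf{R})/h(\textbf{R})\bigr)$. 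With that point made precise, the rest of your derivation is routine bookkeeping and matches the paper's (unstated) calculation.
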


\begin{proof}
	The proof is provided in Appendix~\ref{app:A}.
\end{proof}

Given a prescribed threshold $\xi_c > 0$, we can declare the OIRS-reflected channel as coherent if $|\xi(\Delta\textbf{R})| \leq \xi_c$. 
Therefore, for any given direction $\widehat{\Delta\textbf{R}_1}$, there exists a maximum coherence range from $\Delta\textbf{R}_1$ to its counterpart $\Delta\textbf{R}_2$ in a reverse direction.
As such, the OIRS coherence distance $d_c$ can be obtained as the minimum length over all directions, which is given by
\begin{align}
	d_c \triangleq \min\limits_{\begin{subarray}{c}
			\Delta\textbf{R}_1, \Delta\textbf{R}_2
	\end{subarray}} & \max\ \|\Delta\textbf{R}_1\|_2 + \|\Delta\textbf{R}_2\|_2 \label{Eq:d_c}\\
	\text{s.t.}\ &|\xi(\Delta\textbf{R}_1)| \leq \xi_c,\\
	& |\xi(\Delta\textbf{R}_2)| \leq \xi_c,\\
	& \Delta\textbf{R}_2 = -c \Delta\textbf{R}_1, \exists c>0,
\end{align}
and $d_c$ can be obtained based on the following lemma.

\begin{lemma}
	\label{Lemma:coherence distance}
	For any $A$, $B$, and $\xi_c>0$, let $r_1$ and $r_2$ denote two different roots of the equation $|\xi(r)| \triangleq |Ar^2 + Br| = \xi_c$ that are closest to zero.
	Let $d(A,B; \xi_c) \triangleq |r_2 - r_1|$, which is given by
	\begin{align}
		\label{Eq:r}
		d(A,B; \xi_c) = \left\{
		\begin{aligned}
			&\sqrt{\frac{B^2}{A^2}+\frac{4\xi_c}{\left| A \right|}},\quad \text{if}\ B^2 \leq 4|A|\xi_c,   \\
			&\frac{2\xi_c}{\left| B \right|},\quad \text{otherwise}.
		\end{aligned}
		\right.
	\end{align}
\end{lemma}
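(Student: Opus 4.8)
The plan is to turn the multivariate growth rate of Lemma~\ref{Lemma:growth rate} into a one-dimensional problem and then solve an elementary quadratic. Fixing a search direction $\widehat{\Delta\textbf{R}}$ and substituting $\Delta\textbf{R}=r\,\widehat{\Delta\textbf{R}}$ into Lemma~\ref{Lemma:growth rate}, the linear term contributes a multiple of $r$ and the quadratic form a multiple of $r^2$, so $\xi$ restricted to that line is exactly $\xi(r)=Ar^2+Br$, with $A,B$ determined by $\widehat{\Delta\textbf{R}}$. Since $\xi(0)=0<\xi_c$, the set $\{r:|\xi(r)|\le\xi_c\}$ contains a neighborhood of the origin, and $d(A,B;\xi_c)$ is the length of the connected component of this set containing $0$, i.e., the gap between the two roots of $|Ar^2+Br|=\xi_c$ that bracket the origin. (This is also the per-direction width that gets minimized over $\widehat{\Delta\textbf{R}}$ to yield $d_c$ in~(\ref{Eq:d_c}).)

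First I would normalize. Since $|-\xi|=|\xi|$, assume $A\ge0$; since $r\mapsto-r$ merely reflects the picture about $0$ and preserves the length of the central component, assume in addition $B\ge0$; the claimed formula only involves $|A|,A^2,|B|,B^2$, so these reductions are harmless. The case $A=0$ is immediate: $|Br|=\xi_c$ has roots $\pm\xi_c/|B|$, giving $d=2\xi_c/|B|$ (consistent with the second branch, since $B^2>0=4|A|\xi_c$). For $A>0$, $\xi$ is an upward parabola with vertex at $-B/(2A)\le0$ and minimum $-B^2/(4A)$, so $\xi$ drops below $-\xi_c$ if and only if $B^2>4A\xi_c$; this is precisely the dichotomy in the statement.

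If $B^2\le4A\xi_c$, then $|\xi(r)|\le\xi_c$ is equivalent to $\xi(r)\le\xi_c$, whose solution set is the interval between the two roots of $Ar^2+Br=\xi_c$; subtracting those roots gives $d=\sqrt{B^2+4A\xi_c}/A=\sqrt{B^2/A^2+4\xi_c/|A|}$, the first branch. If instead $B^2>4A\xi_c$, then moving from $0$ toward negative $r$ the curve reaches $-\xi_c$ before it turns at the vertex, so the left bracketing root is the root of $Ar^2+Br=-\xi_c$ closest to $0$, while for $r>0$ the curve stays positive and the right bracketing root is the positive root of $Ar^2+Br=\xi_c$; their difference is $\big(\sqrt{B^2+4A\xi_c}-\sqrt{B^2-4A\xi_c}\big)/(2A)$. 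In this regime the quadratic term is swamped by the linear one on the relevant scale $|r|\sim\xi_c/|B|$ (there $|Ar^2|\sim A\xi_c^2/B^2\ll\xi_c$, since $A\xi_c\ll B^2$), so discarding it---equivalently, solving $|Br|=\xi_c$---yields $d=2\xi_c/|B|$, the second branch. Undoing the normalization recovers~(\ref{Eq:r}).

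I expect the quadratic-formula manipulations to be routine; the points needing care are (i) verifying, in each regime, which of the up to four solutions of $|Ar^2+Br|=\xi_c$ are the two nearest $0$---a short sign analysis driven by the vertex location $-B/(2A)$ and the elementary inequality $\sqrt{B^2+4A\xi_c}+\sqrt{B^2-4A\xi_c}<2|B|$---and (ii) making precise the final reduction, i.e., that when $B^2>4|A|\xi_c$ the quadratic correction is negligible so that the closest-to-zero roots are $\pm\xi_c/|B|$ to the order retained. Step (ii) is the one I would justify most carefully.
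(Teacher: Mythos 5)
Your proof is correct and follows essentially the same route as the paper's: reduce to the one-dimensional quadratic $|Ar^2+Br|=\xi_c$, split on whether the extremum $B^2/(4|A|)$ exceeds $\xi_c$, use Vieta's formulas (equivalently the quadratic formula) in the first case, and fall back on the linear term in the second. If anything, your treatment of the second case is more careful than the paper's (which writes $|r_2-r_1|\ge 2\xi_c/|B|$ while calling it an upper bound): you correctly identify the exact gap as $\bigl(\sqrt{B^2+4A\xi_c}-\sqrt{B^2-4A\xi_c}\bigr)/(2|A|)$ and note that the stated $2\xi_c/|B|$ is only its leading-order (lower-bounding) approximation.
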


\begin{proof}
	The proof is provided in Appendix~\ref{app:B}.	
\end{proof}

Based on Lemma~\ref{Lemma:coherence distance}, the coherence distance of the OIRS can be obtained based on the following proposition.

\begin{proposition}
	\label{Prop:coherence}
	The OIRS coherence distance $d_c$ is given by 
	\begin{align}
		\label{Eq:d_c_expression}
		d_c = \min\limits_{\begin{subarray}{c}
				\widehat{\Delta\textbf{R}}
		\end{subarray}} d\left( \frac{\widehat{\Delta\textbf{R}}^T \boldsymbol{S}_2 \widehat{\Delta\textbf{R}}}{2}, \boldsymbol{s}_1^T \widehat{\Delta\textbf{R}}; \xi_c  \right),
	\end{align}
	where $\widehat{\Delta\textbf{R}}$ denotes the normalized space shift vector that is perpendicular to the OIRS.
\end{proposition}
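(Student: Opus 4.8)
The plan is to collapse the multi-variable optimization in~(\ref{Eq:d_c}) onto a single line through $\textbf{R}$ and then apply Lemma~\ref{Lemma:coherence distance} direction by direction. First I would record that Lemma~\ref{Lemma:growth rate} expresses the growth rate as the second-order polynomial
\begin{equation}
	\xi(\Delta\textbf{R}) = \boldsymbol{s}_1^T \Delta\textbf{R} + \tfrac{1}{2}\,\Delta\textbf{R}^T \boldsymbol{S}_2 \,\Delta\textbf{R},
\end{equation}
where $\boldsymbol{s}_1$ is the coefficient vector multiplying $\Delta\textbf{R}$ and $\boldsymbol{S}_2$ the (symmetric) coefficient matrix multiplying $\Delta\textbf{R}/2$ in the statement of that lemma. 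Since $\xi(\boldsymbol{0})=0<\xi_c$, the point $\textbf{R}$ is coherent with itself, so the feasible set of~(\ref{Eq:d_c}) is nonempty and the problem is well posed.

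Next I would exploit the constraint $\Delta\textbf{R}_2 = -c\,\Delta\textbf{R}_1$ with $c>0$: it forces $\Delta\textbf{R}_1$ and $\Delta\textbf{R}_2$ to lie on a common line through $\textbf{R}$, so I may write $\Delta\textbf{R}_1 = r_2\,\widehat{\Delta\textbf{R}}$ and $\Delta\textbf{R}_2 = r_1\,\widehat{\Delta\textbf{R}}$ for a unit direction $\widehat{\Delta\textbf{R}}$ (taken, as the proposition requires, perpendicular to the OIRS) and scalars $r_2>0>r_1$, whence the objective is $\|\Delta\textbf{R}_1\|_2 + \|\Delta\textbf{R}_2\|_2 = r_2 - r_1 = |r_2-r_1|$. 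Restricting $\xi$ to this line via $\Delta\textbf{R}=r\,\widehat{\Delta\textbf{R}}$ gives $\xi(r\,\widehat{\Delta\textbf{R}}) = Ar^2 + Br$ with $A = \tfrac{1}{2}\widehat{\Delta\textbf{R}}^T\boldsymbol{S}_2\widehat{\Delta\textbf{R}}$ and $B = \boldsymbol{s}_1^T\widehat{\Delta\textbf{R}}$, which is exactly the scalar function $\xi(r)$ of Lemma~\ref{Lemma:coherence distance}. Because $\xi$ is continuous with $\xi(0)=0<\xi_c$, the coherent set $\{\,r:|Ar^2+Br|\le\xi_c\,\}$ has a bounded connected component $[r_1^\star,r_2^\star]$ containing $0$ whose endpoints are the two roots of $|Ar^2+Br|=\xi_c$ closest to the origin, one on each side; the maximal coherence range in this direction is thus $r_2^\star - r_1^\star = |r_2^\star - r_1^\star| = d(A,B;\xi_c)$ by Lemma~\ref{Lemma:coherence distance}. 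Directions with $A=B=0$ give $\xi\equiv 0$ and an unbounded range, hence never realize the outer minimum and may be discarded.

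Finally, since the largest achievable $\|\Delta\textbf{R}_1\|_2+\|\Delta\textbf{R}_2\|_2$ along the line of direction $\widehat{\Delta\textbf{R}}$ equals $d\!\left(\tfrac{1}{2}\widehat{\Delta\textbf{R}}^T\boldsymbol{S}_2\widehat{\Delta\textbf{R}},\ \boldsymbol{s}_1^T\widehat{\Delta\textbf{R}};\ \xi_c\right)$, minimizing this over all admissible unit directions $\widehat{\Delta\textbf{R}}$ gives~(\ref{Eq:d_c_expression}). I expect the crux to be the second step: one must argue that the quantity sought in~(\ref{Eq:d_c}) is the length of the \emph{contiguous} coherent segment around $\textbf{R}$ — not that of some farther interval on which the parabola $Ar^2+Br$ happens to re-enter $[-\xi_c,\xi_c]$ — and then match its endpoints with Lemma~\ref{Lemma:coherence distance}'s "roots closest to zero," while separately disposing of the degenerate directions; reading off the quadratic form from Lemma~\ref{Lemma:growth rate} and carrying out the final one-dimensional minimization over the circle of admissible directions is then routine.
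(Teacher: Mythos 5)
Your proposal is correct and follows essentially the same route as the paper's proof: restrict $\xi$ to a line $\Delta\textbf{R}=r\,\widehat{\Delta\textbf{R}}$, read off $A=\widehat{\Delta\textbf{R}}^T\boldsymbol{S}_2\widehat{\Delta\textbf{R}}/2$ and $B=\boldsymbol{s}_1^T\widehat{\Delta\textbf{R}}$ from Lemma~\ref{Lemma:growth rate}, invoke Lemma~\ref{Lemma:coherence distance} for the per-direction maximal range, and minimize over directions. Your added care about the contiguous coherent segment around the origin and the degenerate $A=B=0$ directions is a welcome tightening of details the paper leaves implicit, but it does not change the argument.
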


\begin{proof}
	Given any direction $\widehat{\Delta\textbf{R}}$ such that $\Delta\textbf{R} = r\widehat{\Delta\textbf{R}}$, the growth rate can be re-expressed as $\xi(r)= r(\boldsymbol{s}_1^T \widehat{\Delta\textbf{R}}) + r^2(\widehat{\Delta\textbf{R}}^T \boldsymbol{S}_2\widehat{\Delta\textbf{R}})/2$ according to Lemma~\ref{Lemma:growth rate}.
	Next, to solve $|\xi(r)|\leq \xi_c$, the longest distance of $r$ can be obtained based on Lemma~\ref{Lemma:coherence distance} by setting $A = \widehat{\Delta\textbf{R}}^T \boldsymbol{S}_2 \widehat{\Delta\textbf{R}}/2$ and $B = \boldsymbol{s}_1^T \widehat{\Delta\textbf{R}}$.
	Therefore, the OIRS coherence distance in~(\ref{Eq:d_c}) can be derived by solving a series of subproblems that have different directions of $\widehat{\Delta\textbf{R}}$.
\end{proof}

\section{Proposed OIRS Channel Estimation Algorithm}
{\label{Sec:Proposed}}
Based on the spatial coherence property, in this subsection, we propose a spatial sampling-based algorithm to estimate the OIRS-reflected channel with low estimation overhead.
To this end, we first unify the OIRS association matrices $\boldsymbol{G}$ and $\boldsymbol{F}$ as a composite association matrix $\boldsymbol{V} = [\boldsymbol{v}_1, \cdots, \boldsymbol{v}_{N_tN_r}] \in\{0,1\}^{N\times N_tN_r}$, where $\boldsymbol{v}_{n_r + (n_t - 1)N_r} = \boldsymbol{f}_{n_r} \odot \boldsymbol{g}_{n_t}$.
As such, the OIRS-reflected channel can be re-expressed as~\cite{10190313}
\begin{equation}
	\setlength\abovedisplayskip{3pt}
	\label{Eq:MIMO_Vlinear}
	\text{vec}\left( \boldsymbol{H} \right) = \text{blkdiag}\left( \boldsymbol{V} \right)^T\text{vec}\left(\boldsymbol{H}_c\right),
	\setlength\belowdisplayskip{3pt}
\end{equation}
where $\text{blkdiag}(\boldsymbol{V})$ denotes a block diagonal matrix where the columns of $\boldsymbol{V}$ form the main diagonal elements in succession
Due to the fact that $\text{vec}(\boldsymbol{A}\boldsymbol{B}\boldsymbol{C}) = (\boldsymbol{C}^T \otimes \boldsymbol{A}) \text{vec}(\boldsymbol{B})$, the received signal in~(\ref{Eq:signal model_cohere}) can be vectorized as
\begin{align}
	\label{Eq:observed signal}
	\setlength\abovedisplayskip{3pt}
	\text{vec}\left(\boldsymbol{Y}\right) &= \left(\boldsymbol{X}^{T} \otimes \boldsymbol{I}_{N_r}\right)\text{vec}\left(\boldsymbol{H}\right) + \text{vec}\left(\boldsymbol{Z}\right) \notag\\
	&= \left(\boldsymbol{X}^{T} \otimes \boldsymbol{I}_{N_r}\right) \text{blkdiag}\left(\boldsymbol{V}\right)^T\text{vec}\left(\boldsymbol{H}_c\right) + \text{vec}\left(\boldsymbol{Z}\right).
	\setlength\belowdisplayskip{3pt}
\end{align}
Based on~(\ref{Eq:observed signal}), the entries of $\boldsymbol{H}_c$ can be estimated using some classical channel estimation algorithms by dynamically varying the OIRS association pattern $\boldsymbol{V}$.
However, note that $\rank(\boldsymbol{X}^{T} \otimes \boldsymbol{I}_{N_r}) = \rank(\boldsymbol{X}) \times \rank(\boldsymbol{I}_{N_r}) \leq N_t N_r$ and the total number of CSI parameters is $NN_tN_r$.
We need to perform at least $NN_tN_r / N_t N_r = N$ times of channel estimations, each corresponding to a subset of OIRS reflecting elements. 
To reduce the pilot overhead, we propose to leverage the OIRS coherence distance, as detailed below.

\addtolength{\topmargin}{0.05in}


\textbf{\textit{Phase \uppercase\expandafter{\romannumeral1:} Association pattern design.}}
As shown in Fig.~\ref{Fig:sample}, the OIRS reflecting elements are divided into $Q=Q_v \times Q_h$ subarrays according to the coherence distance $d_c$.
In each subarray, the reflecting element at its $n_r$th row and $n_t$th column is aligned with the $n_t$th LED and the $n_r$th PD, i.e., 
\begin{equation}
	\setlength\abovedisplayskip{3pt}
	f_{n_q, n_r} = g_{n_q, n_t} = 1,
	\setlength\belowdisplayskip{3pt}
\end{equation}
where $n_q$ denotes the index of the above reflecting element.
Note that there is no need to estimate the channel gains for other non-aligned OIRS reflecting elements, which will be derived by performing a spatial interpolation thanks to the spatial coherence.
Let $\Omega_{n_r,n_t}$ denote the index set of the aligned OIRS reflecting elements over all of its subarrays, which is given by
\begin{equation}
	\label{Eq:definition Omega}
	\setlength\abovedisplayskip{3pt}
	\Omega_{n_r,n_t} \triangleq \{n_q| \ q = 1,\cdots,Q \}.
	\setlength\belowdisplayskip{3pt}
\end{equation}
This OIRS association pattern is denoted by $\boldsymbol{G}^*$ and $\boldsymbol{F}^*$ and its equivalent parameter matrix is denoted by $\boldsymbol{V}^*$.

\begin{figure}[t]
	\centering
	\includegraphics[width=0.45\textwidth]{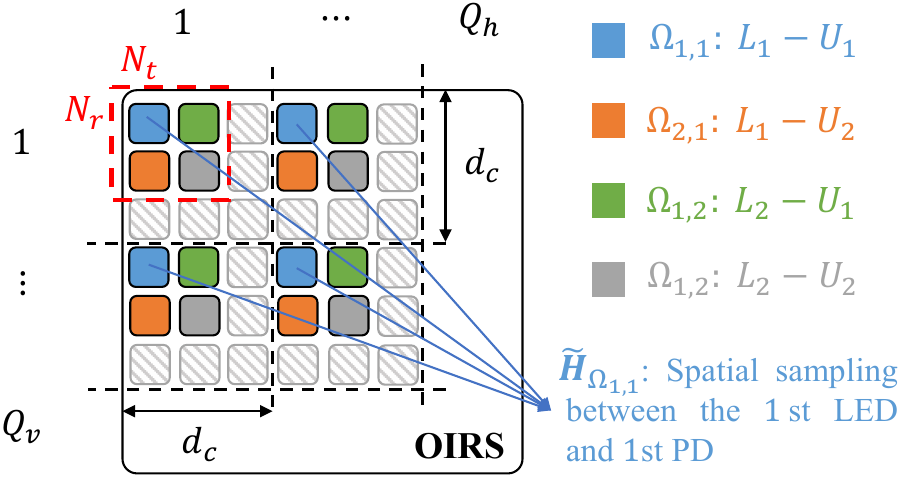}
	\caption{The proposed OIRS channel estimation based on spatial sampling.}
	\label{Fig:sample}
	\vspace{-0.2cm}
\end{figure}

\textbf{\textit{Phase \uppercase\expandafter{\romannumeral2:} Local channel estimation.}} 
In this phase, all OIRS subarrays are activated sequentially and the CSI parameters associated with each subarray are estimated.
Specifically, the $q$th subarray is activated in the $q$th time block, which is comprised of $P$ time slots, and we denote the composite association matrix in this time block as $\boldsymbol{V}_{q}^*$.
As the $n$th row of $\boldsymbol{V}_{q}^*$ represents the alignment of the $n$th OIRS element, we set the rows corresponding to the $q$th subarray the same as $\boldsymbol{V}^*$ while its other rows are set to zeros.
Let $\boldsymbol{H}_{q}$ denote the OIRS-reflected channel when the $q$th subarray is activated, which, based on~(\ref{Eq:MIMO_Vlinear}), is given by
\begin{equation}
	\label{Eq:relation}
	\setlength\abovedisplayskip{3pt}
	\text{vec}\left( \boldsymbol{H}_{q} \right) = \text{blkdiag}\left( \boldsymbol{V}_{q}^* \right)^T\text{vec}\left(\boldsymbol{H}_c\right).
	\setlength\belowdisplayskip{3pt}
\end{equation}
Similarly to~(\ref{Eq:signal model_cohere}), the received signal in the $q$th time block can be re-expressed as
\begin{equation}
	\label{Eq:phase 2}
	\setlength\abovedisplayskip{3pt}
	\boldsymbol{Y}_{q} = \boldsymbol{H}_{q} \boldsymbol{X}_{q} + \boldsymbol{Z}_{q},
	\setlength\belowdisplayskip{3pt}
\end{equation}
where $\boldsymbol{X}_{q}$ and $\boldsymbol{Z}_{q}$ denote the stacked pilot signals and AWGN over the $P$ time slots.
Based on~(\ref{Eq:phase 2}), $\boldsymbol{H}_{q}$ can be estimated by applying the MMSE estimator as
\begin{equation}
	\label{Eq:MMSE estimator}
	\setlength\abovedisplayskip{3pt}
	\boldsymbol{\tilde{H}}_{q} = \boldsymbol{Y}_{q}\boldsymbol{X}_{q}^T(\boldsymbol{X}_{q}\boldsymbol{X}_{q}^T + \sigma^2 \boldsymbol{I}_{N_t})^{-1},
	\setlength\belowdisplayskip{3pt}
\end{equation}
which includes $N_tN_r$ entries of $\boldsymbol{H}_c$ due to the low-rank property of $\boldsymbol{V}_{q}^*$.
Note that the entry of $\boldsymbol{\tilde{H}}_{q}$ at the $n_r$th row and the $n_t$th column, namely $[\boldsymbol{\tilde{H}}_{q}]_{n_r, n_t}$, is the estimated CSI between the $n_t$th LED and the $n_r$th PD via the corresponding OIRS reflecting element.

\begin{algorithm}[t]
	\caption{Proposed Spatial Sampling-based Algorithm}
	\label{Algo:sampling}
	\textbf{Input:} $N_t$, $N_r$, $N$, $d_c$\\
	\textbf{Output:} $\boldsymbol{G}^*$, $\boldsymbol{F}^*$, $\tilde{\boldsymbol{H}}_c$
	\begin{algorithmic}[1]
		\State Set $Q_v$, $Q_h$ based on $d_c$;
		\State Set subarray index $q \gets 1$;
		\Repeat
		\State Set $f_{n_q, n_r}^* \gets 1$ and $g_{n_q, n_t}^* \gets 1$, where $n_q$ denotes the index of the OIRS reflecting array at the $n_r$th row and $n_t$th column of the $q$th subarray;
		\State Update subarray index by $q \gets q + 1$;
		\Until{$q > Q$}
		\State Each OIRS reflecting element is configured to align with its associated LED and PD based on $\boldsymbol{G}^*$ and $\boldsymbol{F}^*$;
		\State Set $n \gets 1$, $n_r \gets 1$, and $q \gets 1$;
		\Repeat
		\State Activate the $q$th subarray;
		\State Local channel estimation to obtain $\boldsymbol{\tilde{H}}_{q}$ based on~(\ref{Eq:MMSE estimator});
		\State Update the index by $q \gets q + 1$;
		\Until{$q > Q$}
		\State Obtain partial CSI $\boldsymbol{\tilde{H}}_{\Omega_{n_r, n_t}}$ based on~(\ref{Eq:subarray cup});
		\Repeat
		\State Set LED index $n_t \gets 1$;
		\Repeat
		\State Interpolate full channel matrix $\boldsymbol{\tilde{H}}_{{n_r, n_t}}$ between 
		\Statex \qquad \quad the $n_r$th PD and the $n_t$th LED based on~(\ref{Eq:recover CSI});
		\State Recover CSI vector via 
		$\tilde{\boldsymbol{h}}_{n_r,n_t} = \text{vec}(\boldsymbol{\tilde{H}}_{{n_r, n_t}})$;
		\State Update LED index by $n_t \gets n_t + 1$;
		\Until{$n_t > N_t$}
		\State Update PD index by $n_r \gets n_r + 1$;
		\Until{$n_r > N_r$}
		\State Output the estimated channel matrix $\tilde{\boldsymbol{H}}_c = [\tilde{\boldsymbol{h}}_{1,1}, \cdots, \tilde{\boldsymbol{h}}_{N_r,N_t}]$;
	\end{algorithmic}
\end{algorithm}
\setlength{\textfloatsep}{0.5cm}

\textbf{\textit{Phase \uppercase\expandafter{\romannumeral3:} Full CSI recovery.}}
After completing all local channel estimations for the $Q$ subarrays, full OIRS-reflected CSI can be recovered by conducting a two-dimensional interpolation.
Without loss of generality, for the CSI between the $n_t$th LED and the $n_r$th PD, which are marked by the same color in Fig.~\ref{Fig:sample}, they can be expressed as the union of the corresponding entries of all $\boldsymbol{\tilde{H}}_{q}$, $q=1,\cdots, Q$, i.e.,
\begin{equation}
	\label{Eq:subarray cup}
	\setlength\abovedisplayskip{3pt}
	\boldsymbol{\tilde{H}}_{\Omega_{n_r, n_t}} = \mathop{\cup}\limits_{q}\left\{[\boldsymbol{\tilde{H}}_{q}]_{n_r, n_t}\right\}.
	\setlength\belowdisplayskip{3pt}
\end{equation}
These CSI is sufficient for full CSI recovery by leveraging the spatial coherence of OIRS.
Specifically, the channel gain of the reflecting elements at the $i$th row and the $j$th column can be retrieved as
\begin{equation}
	\label{Eq:recover CSI}
	\setlength\abovedisplayskip{3pt}
	[\boldsymbol{\tilde{H}}_{{n_r, n_t}}]_{i, j} = \sum_{q_v=1}^{Q_v}\sum_{q_h=1}^{Q_h} w(i, q_v; j, q_h) [\boldsymbol{\tilde{H}}_{\Omega_{n_r, n_t}}]_{q_v, q_h},
	\setlength\belowdisplayskip{3pt}
\end{equation}
where $w(\cdot)$ denotes the spline weight function.
The recovery of the CSI matrix $\boldsymbol{H}_c$ can be achieved by repeating the above process for different LED and PD antenna pairs by varying $n_t$ and $n_r$.

The proposed spatial sampling-based algorithm for OIRS channel estimation is summarized in Algorithm~\ref{Algo:sampling}.

\vspace{-0.1cm}
\section{Numerical Results}
\label{Sec:numerical}
\vspace{-0.1cm}
In this section, we provide numerical results to evaluate the efficacy of our proposed spatial sampling-based channel estimation method.
We consider the downlink VLC in an indoor environment with dimensions of 4m$\times$4m$\times$3m.
The transmitter is composed of multiple LEDs, which are in the shape of a circle with a radius of 0.1m and have a Lambertian index of 1.
The OIRS-reflected channel is generated by physical optics to emulate a practical scenario.
Specifically, an OIRS centered at (2m, 0m, 1.5m) is affixed to the wall along the $XoZ$ plane.
It is composed of $N = 24^2$ rectangular patches, each measuring 0.05m$\times$0.05m with a spacing of 0.1m. 
The reflectivity of each OIRS reflecting element is $0.9$.
Assuming the direct LoS signal is obstructed, the emitted light can only be received via the reflection from the OIRS mounted on the wall.
The receiver is composed of multiple PDs, where each PD is a square with the length of 0.05m and the spacing between PDs is $0.4$m.


\addtolength{\topmargin}{0.02in}

In Fig.~\ref{Fig:distance}, the spatial coherence of the OIRS is illustrated by showing the normalized channel gain versus the space shift.
In particular, a single LED and a PD are located at (1m, 2m, 3m) and (2m, 2m, 0m), respectively.
Meanwhile, an OIRS reflecting element is located at (1m, 2m, 1.5m) and its normal vector is fixed such that the cascaded LED-OIRS-PD path follows the reflection law.
For example, if the OIRS-reflected channel is defined as coherent when $\xi_c = 0.04$, it can be observed from the result that the OIRS coherence distance is nearly 0.4m.
It follows that we can employ this property to reduce the pilot overhead for OIRS channel estimation, as pursued in this paper.
Although the approximation model deviates from the real channel gain obtained through physical optics, the result shows that the coherence characteristic of OIRS exists.

\begin{figure}[t]
	\centering
	\includegraphics[width=0.38\textwidth]{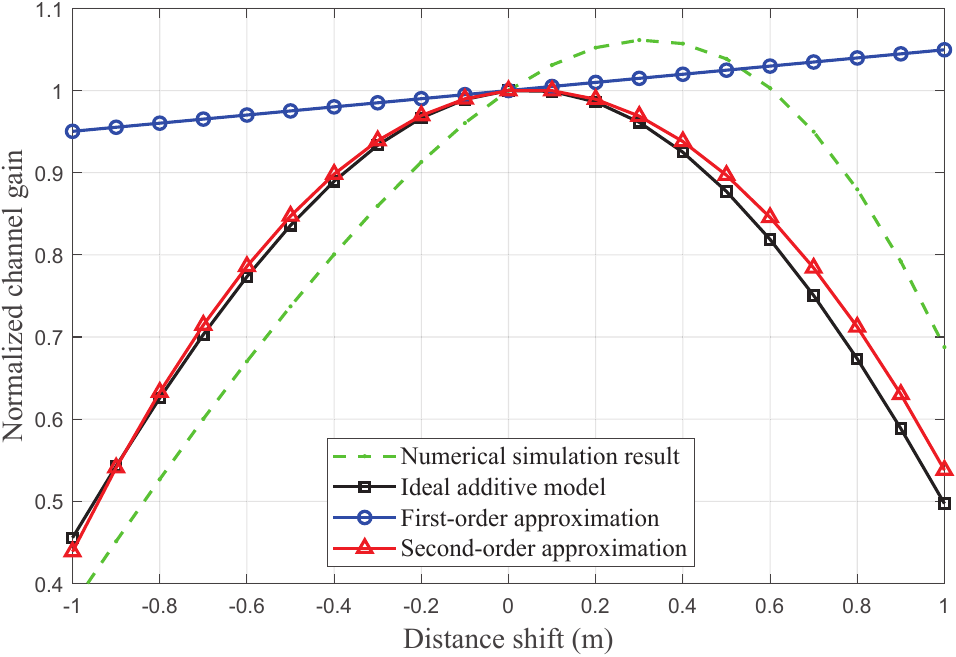}
	\caption{Normalized reflected channel gain versus the space shift over the OIRS.}
	\label{Fig:distance}
\end{figure}
\begin{figure}[t]
	\centering
	\includegraphics[width=0.42\textwidth]{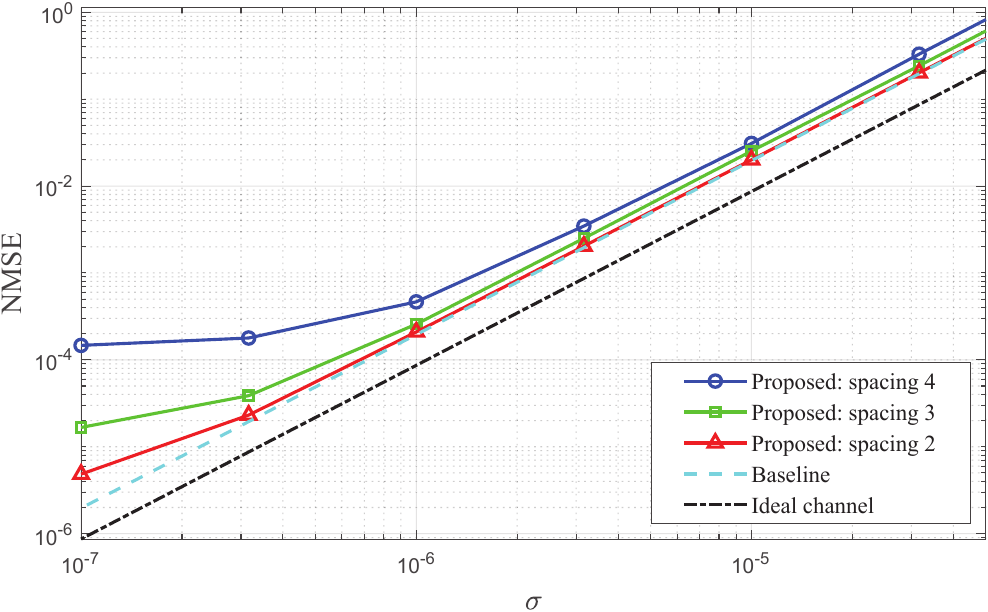}
	\caption{NMSE of the proposed OIRS channel estimation algorithm versus $\sigma$.}
	\label{Fig:NMSE_MIMO}
\end{figure}
\begin{figure}[t]
	\centering
	\includegraphics[width=0.42\textwidth]{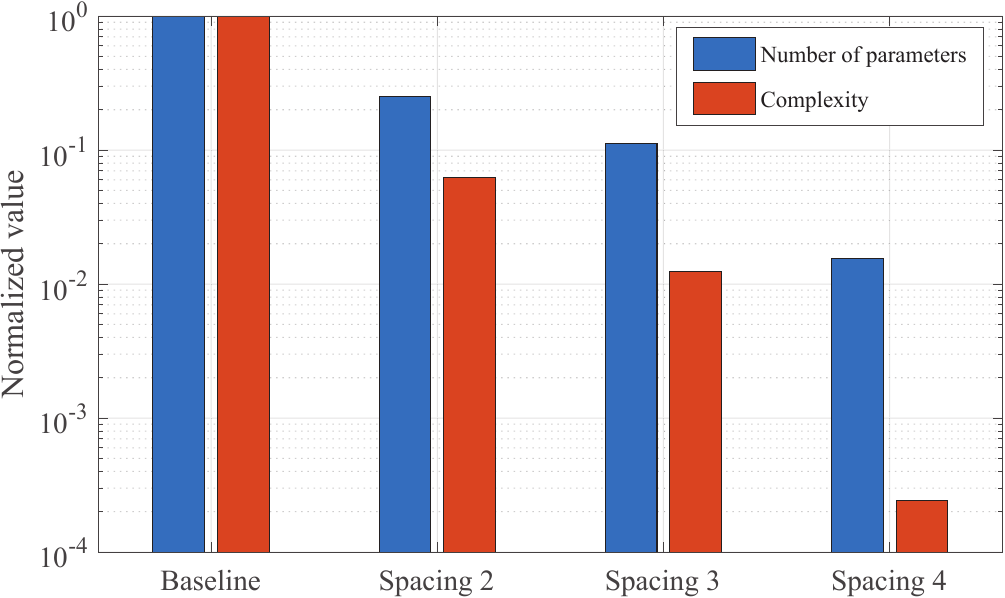}
	\caption{Overhead of the proposed channel estimation algorithm: Number of parameters and the computational complexity.}
	\label{Fig:overhead}
\end{figure}

In Fig.~\ref{Fig:NMSE_MIMO}, we compare the proposed channel estimation algorithm with other baseline schemes with $N_t = N_r = 2$.
In particular, we divide all reflecting elements into subarrays based on the coherence distance, i.e., there are 12$\times$12 sub-arrays with the spacing of 2.
Besides, there are 8$\times$8 sub-arrays and 6$\times$6 sub-arrays for the spacing of 3 and 4, respectively.
In each subarray, there are $N_t\times N_r = 4$ reflecting elements used for the local channel estimation.
Moreover, we send $P = 100$ pilot symbols in each time block.
On the contrary, there is no spatial sampling performed in the baseline scheme, thus resulting in the highest estimation accuracy. 
As shown in Fig.~\ref{Fig:NMSE_MIMO}, we plot the normalized mean square error (NMSE) versus the standard deviation of the noise $\sigma$ by different schemes.
It is observed that the NMSE of the proposed algorithm varies under different sampling spacing.
The NMSE gap between the proposed algorithm and the baseline scheme is negligible when the spacing is less than $d_c$, but becomes more significant when it is larger than $d_c$.
The above observation verifies the effectiveness of employing the spatial coherence property of the OIRS to simplify its channel estimation.

Finally, we show overhead of the proposed OIRS channel estimation method in Fig.~\ref{Fig:overhead} in terms of the complexity and CSI parameters.
It is observed that the number of CSI parameters decreases with the spacing, thanks to the use of the spatial sampling method.
In particular, when the sampling spacing is 2, the proposed algorithm estimates the OIRS-reflected channel gains every 4 OIRS reflecting elements, which is only a quarter of the number of CSI parameters to be estimated in the baseline scheme.
Next, the computational complexity of the MMSE estimator in~(\ref{Eq:MMSE estimator}) can be expressed as $\mathcal{O}(P(N_t^2 N_r + N_t^3 + N_t^2N_r)) = \mathcal{O}(PN_t^2(N_t + 2N_r))$.
It follows that significant complexity reduction can be achieved by our proposed channel estimation algorithm.
Even though, it is observed that the NMSE of the proposed channel estimation method increases with the sampling spacing, which implies a trade-off between the channel estimation accuracy and the pilot overhead.

\addtolength{\topmargin}{-0.05in}

\section{Conclusions}
{\label{Sec:Conclude}}
In this paper, we proposed an efficient channel estimation method for the OIRS-assisted VLC system.
Based on the OIRS association model, we first analyzed the effects of the space shift on the OIRS-reflected channel gain and derived the OIRS coherence distance. 
By leveraging this coherence, we proposed a spatial sampling-based algorithm to estimate the OIRS-reflected channel, where the subarray-level CSI is successively estimated to obtain the full CSI via interpolation.
Finally, numerical results validated our theoretical analysis and also showed the efficacy of our proposed channel estimation method in balancing the accuracy and overhead.

\begin{appendices}
	\section{Proof of Lemma 1}
	\label{app:A}
	According to~(\ref{Eq:lambertian}), the OIRS-reflected channel gain with space shift $\Delta\textbf{R}$ can be expressed as
	\begin{align}
		\label{Eq:h_spce_shift}
		h(\textbf{R} + \Delta\textbf{R}) \propto & \frac{1}{\left(\|\textbf{LR} + \Delta\textbf{R}\|_2 + \|\textbf{UR} + \Delta\textbf{R}\|_2\right)^2} \times \notag\\
		& \frac{\left(\widehat{\boldsymbol{N}}_1^T(\textbf{LR} + \Delta\textbf{R})\right)^m}{\|\textbf{LR} + \Delta\textbf{R}\|_2^m} 
		\frac{\widehat{\boldsymbol{N}}_2^T(\textbf{UR} + \Delta\textbf{R})}{\|\textbf{UR} + \Delta\textbf{R}\|_2}.
	\end{align}
	Based on~(\ref{Eq:xis}), the relative growth rate can be reformulated as $\xi(\Delta \textbf{R})=\boldsymbol{s}_1^T\Delta\textbf{R} + \Delta\textbf{R}^T\boldsymbol{S}_2\Delta\textbf{R}/2 + o\left(\|\Delta\textbf{R}\|^2\right)$ by the Taylor series, where the coefficient vector $\boldsymbol{s}_1$ is given by
	\begin{align}
		\label{Eq:spa-Taylor first}
		\boldsymbol{s}_1 = &\frac{m}{d_1\cos\theta}\widehat{\boldsymbol{N}}_1 + \frac{1}{d_2\cos\phi}\widehat{\boldsymbol{N}}_2 - \left(\frac{m}{d_1} + \frac{2}{d_1 + d_2}\right)\widehat{\textbf{LR}} \notag\\
		&- \left(\frac{1}{d_2} + \frac{2}{d_1 + d_2}\right)\widehat{\textbf{UR}},
	\end{align}
	and the coefficient matrix $\boldsymbol{S}_2$ is obtained as
	\begin{align}
		\label{Eq:spa-Taylor second}
		\boldsymbol{S}_2 =  &\frac{-m}{d_1^2\cos^2\theta}\widehat{\boldsymbol{N}}_1\widehat{\boldsymbol{N}}_1^T - \frac{1}{d_2^2\cos^2\phi}\widehat{\boldsymbol{N}}_2\widehat{\boldsymbol{N}}_2^T \notag\\
		&+ 2\left(\frac{m}{d_1^2} + \frac{1}{(d_1 + d_2)^2} + \frac{2}{d_1^2(d_1 + d_2)}\right)\widehat{\textbf{LR}}\widehat{\textbf{LR}}^T \notag\\
		&+ 2\left(\frac{1}{d_2^2} + \frac{1}{(d_1 + d_2)^2}+\frac{2}{d_2^2(d_1 + d_2)}\right)\widehat{\textbf{UR}}\widehat{\textbf{UR}}^T \notag\\
		&+ \frac{2}{(d_1 + d_2)^2}\left(\widehat{\textbf{LR}}\widehat{\textbf{UR}}^T + \widehat{\textbf{UR}}\widehat{\textbf{LR}}^T\right) \notag\\
		&-\left(\frac{m}{d_1^2} + \frac{1}{d_2^2} + \frac{2}{d_1^2(d_1 + d_2)} + \frac{2}{d_2^2(d_1 + d_2)}\right)\boldsymbol{I}_3.
	\end{align}
	
	\section{Proof of Lemma 2}
	\label{app:B}
	As shown in Fig.~\ref{Fig:inequality}, the extremum of function $|\xi(r)|$ can be obtained as $|\xi(-B/A/2)| = B^2/|A|/4$, and thus $|r_2 - r_1|$ can be derived in two cases.
	Firstly, when $B^2\leq 4|A|\xi_c$, there are two roots for the equation of $|\xi(r)| = \xi_c$ and $|r_2 - r_1|$ can be obtaiend by Weda's Theorem as
	\begin{equation}
		(r_2 - r_1)^2 = (r_2 + r_1)^2 - 4 r_1 r_2 =\frac{B^2}{A^2}+\frac{4\xi_c}{\left| A \right|},
	\end{equation}
	Secondly, when $B^2 > 4|A|\xi_c$, there are four roots and $|r_2 - r_1|$ can be upper-bounded by
	\begin{align}
		|r_2 - r_1| = 2 \times \frac{|r_2 - r_1|}{2} \geq \frac{2\xi_c}{\nabla \xi_t|_{\Delta t = 0}} = \frac{2\xi_c}{|B|},
	\end{align}
	where the inequality is due to the decreasing derivative of the function $\xi(r)$.
	
	\begin{figure}[t]
		\centering
		\includegraphics[width=0.45\textwidth]{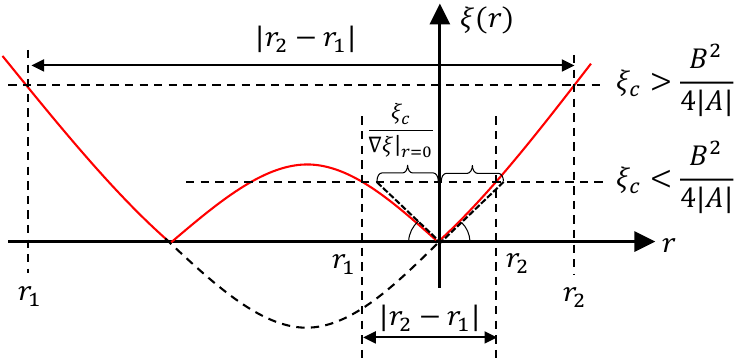}
		\caption{Sketch map of the proof of Lemma~\ref{Lemma:coherence distance}.}
		\label{Fig:inequality}
	\end{figure}
\end{appendices}

\balance
\small
\bibliographystyle{IEEEtran.bst}
\bibliography{IEEEabrv,ref}
\newpage
\vfill
\end{document}